\documentclass[11pt]{article}

\usepackage{graphicx}
\usepackage{comment}
\usepackage{booktabs}
\usepackage{amssymb,amsmath,amsthm}
\usepackage{hyperref}
\usepackage{cleveref}
\usepackage{float}
\usepackage{wrapfig}
\usepackage{listings}
\usepackage{xcolor}
\usepackage{nicefrac}
\usepackage{tabularx}
\usepackage{multirow}
\usepackage{array}
\usepackage{nicefrac}
\usepackage{rotating}
\usepackage{afterpage}
\usepackage{lmodern}
\usepackage[most]{tcolorbox}

\usepackage[left=3cm,right=3cm]{geometry}

\newtheorem{definition}{Definition}
\newtheorem{lemma}{Lemma}
\newtheorem{theorem}{Theorem}

\newcommand{\etienne}[1]{}

\definecolor{std}{RGB}{130,130,130}

\lstdefinestyle{py}{
  language=Python,
  basicstyle=\ttfamily\small,
  columns=fullflexible,
  frame=single,
  breaklines=true,
  showstringspaces=false,
  keywordstyle=\color{blue!60!black},
  commentstyle=\color{green!50!black},
  stringstyle=\color{orange!60!black},
}

\newcolumntype{Y}{>{\centering\arraybackslash}X}

\usepackage[
  backend=biber,
  style=numeric,
]{biblatex}

\makeatletter
\renewcommand{\maketitle}{%
  \begin{center}
    {\LARGE\bfseries \@title \par}
    \vspace{1em}
    {\large \@author \par}
  \end{center}
  \vspace{2em}
}
\makeatother

\title{$k$-server-bench: Automating Potential Discovery for \\ the $k$-Server Conjecture}
\author{Kirill Brilliantov, Etienne Bamas, Emmanuel Abb\'e \\ MDS Lab, EPFL}

\begin{document}
\maketitle

\begin{abstract}
\setlength{\parindent}{0pt}
\setlength{\parskip}{0.4em}

\noindent We introduce a code-based challenge for automated, open-ended mathematical discovery based on the $k$-server conjecture, a central open problem in competitive analysis. The task is to discover a potential function satisfying a large graph-structured system of simple linear inequalities. The resulting evaluation procedure is sound but incomplete: any violated inequality definitively refutes a candidate, whereas satisfying all inequalities does not by itself constitute a proof of the corresponding conjecture's special case. Nevertheless, a candidate that passes all constraints would be strong evidence toward a valid proof and, to the best of our knowledge, no currently known potential achieves this under our formulation in the open $k=4$ circle case. As such, a successful candidate would already be an interesting contribution to the $k$-server conjecture, and could become a substantial theoretical result when paired with a full proof.

Experiments on the resolved $k=3$ regime show that current agentic methods can solve nontrivial instances, and in the open $k=4$ regime they reduce the number of violations relative to existing potentials without fully resolving the task. Taken together, these results suggest that the task is challenging but plausibly within reach of current methods.

Beyond its relevance to the $k$-server community, where the developed tooling enables researchers to test new hypotheses and potentially improve on the current record, the task also serves as a useful \emph{benchmark} for developing code-based discovery agents. In particular, our $k=3$ results show that it mitigates important limitations of existing open-ended code-based benchmarks, including early saturation and the weak separation between naive random baselines and more sophisticated methods.

\href{https://github.com/kibrq/k-server-bench}{\texttt{Code: https://github.com/kibrq/k-server-bench}}

\end{abstract}

\section{Introduction}

Recent advances in large language models (LLMs) and agentic scaffolds have made \emph{automated science} a practical research direction rather than a slogan: contemporary systems can routinely write code, propose hypotheses, run experiments, and iterate on failures \parencite{lu2024aiscientistfullyautomated,zheng2025automationautonomysurveylarge}. In mathematics, early task-specific efforts \parencite{davies2021guiding,peifer2020learning,wagner2021constructions,lample2019deep,charton2024patternboostconstructionsmathematicslittle,brilliantov2023applyinglanguagemodelsalgebraic} have recently been followed by more general-purpose agents that now can contribute beyond textbook problem solving, including program-search--driven discoveries and code-evolution workflows that improve sophisticated constructions \parencite{Romera-Paredes2024-cn,novikov2025alphaevolvecodingagentscientific,georgiev2025mathematicalexplorationdiscoveryscale,nagda2025reinforcedgenerationcombinatorialstructures,nagda2026reinforcedgenerationcombinatorialstructures}.

Inspired by these developments, we ask whether the same agentic capabilities can be brought to bear on \emph{open-ended discovery} in foundational theoretical computer science. We focus on the \emph{$k$-server conjecture}, widely regarded as the ``holy grail of competitive analysis'' and a flagship open problem in online algorithms. Since its formulation by Manasse, McGeoch, and Sleator in 1990 \cite{MANASSE1990208}, the conjecture has shaped the development of competitive analysis as a field \parencite{chrobak1991new,chrobak1991optimal,fiat1994competitive,Koutsoupias2009-jl,bartal2004competitive,bein20023}. It asks whether, for every parameter $k$ and every metric space, there exists a deterministic online algorithm whose total cost is at most $c=k$ times the offline optimum up to an additive constant. We refer to $c$ as the \emph{competitive ratio}, and call an algorithm satisfying this guarantee \emph{$c$-competitive}.

The best known general guarantee is due to \textcite{Koutsoupias1995-ro}, who proved that the \emph{Work Function Algorithm} (WFA) --- widely viewed as the leading candidate for resolving the conjecture --- is $(2k-1)$-competitive for any $k$ and any metric space. In many resolved metric families where WFA \emph{is} $k$-competitive, the proof proceeds by exhibiting a \emph{potential function} that satisfies a prescribed collection of inequalities \parencite{Koutsoupias2009-jl,Koutsoupias1995-ro}. Recent attention has concentrated on the \emph{circle} metric, a highly structured setting that nonetheless remains a major obstacle for existing potentials. \textcite{coester2021kserverconjectureunifyingpotential} introduced a unifying potential that resolves many known tight cases, yet it provably does not extend to the circle. By contrast, \textcite{huang2022deterministic3servercirclelimitation} settled the $k=3$ circle case by introducing a circle-specific modification, underscoring that the remaining difficulty appears to be discovering the right potential function.

Analysis of the development of this problem suggests that progress largely comes from identifying the right potential (or construction) and verifying it—a workflow naturally amenable to automation. We operationalize this paradigm by representing candidate potentials as Python programs and posing an explicit, automatically verifiable search problem: minimize the number of violated inequalities, with the natural target of \emph{zero} violations. A central roadblock is that the most interesting metric spaces are large or even infinite (such as the circle), making the full inequality system intractable to enumerate. To mitigate this, we evaluate potentials on a sequence of increasingly fine discretizations and on so-called $k$-taxi augmentations \parencite{coester2021kserverconjectureunifyingpotential} that capture additional structure of the infinite case. As a result, evaluation is sound but \textit{incomplete}: any violated inequality definitively refutes a candidate, whereas a candidate that satisfies all checked inequalities becomes a promising proof candidate for the infinite case. In practice, however, satisfying even this reduced constraint set is challenging; to the best of our knowledge, no potential currently satisfies all of the inequalities we enumerate for the $k=4$ setting, so discovering one would already constitute a substantial contribution.

We intentionally aim for a task that is challenging yet plausibly within reach of iterative LLM-based methods, rather than expecting models to produce fully formal proofs for arbitrary major conjectures. Three features support this: (i) the evaluator yields dense, automatic feedback enabling rapid iteration; (ii) the $k=3$ circle case is solved, providing a concrete calibration target; and (iii) empirically, agents can reach zero detected violations on $k=3$ instances and substantially reduce violations on the hardest $k=4$ instances beyond prior human-designed potentials in our evaluation suite. In this sense, our setup also contrasts with open-ended \textit{proof} benchmarks such as First Proof \parencite{abouzaid2026proof} and Erd\"os Bench \parencite{feng2026semiautonomousmathematicsdiscoverygemini,feng2026autonomousmathematicsresearch}, where automated feedback is typically sparser and iteration is less directly supported.

Moreover, our task can be viewed as a valuable \textit{benchmark} for agentic open-ended discovery. By providing a clear target---achieving \emph{zero} violations---it supports diagnostic evaluation beyond incremental score improvements \etienne{not sure about this sentence}. Even in the resolved $k=3$ regime, this criterion cleanly ranks methods from naive random sampling to more structured, agentic methods. This directly addresses a limitation of recent code-based open-ended discovery benchmarks \parencite{Romera-Paredes2024-cn,novikov2025alphaevolvecodingagentscientific,georgiev2025mathematicalexplorationdiscoveryscale}: their objectives often saturate and the progress is visible only through tiny decimal gains \parencite{novikov2025alphaevolvecodingagentscientific,lange2025shinkaevolveopenendedsampleefficientprogram,wan2025loongflowdirectedevolutionarysearch,wang2025thetaevolvetesttimelearningopen}, while random baselines can be surprisingly strong and obscure algorithmic contributions \parencite{gideoni2025random}. In contrast, our setup can track progress via the \emph{rate of perfect solutions} (e.g., the probability of reaching zero violations), not just marginal improvements in average score.

\paragraph{Contributions.}
\begin{enumerate}
    \item \textbf{An executable benchmark for an open problem.} We introduce an executable environment built around the $k$-server conjecture on the circle and cast potential-function proof search as a concrete optimization task with automatically checkable constraints and a clear target of \emph{zero} violations. We provide multiple benchmark instantiations (varying $k$, discretization, and optional $k$-taxi augmentation), along with a representation format and evaluation tooling that support rapid iteration and reproducible comparison across methods.
    \item \textbf{Initial baselines and agentic attempts.} We report initial experiments with several recent agent methods, as well as coding-agent plus human-in-the-loop workflows, demonstrating reliable solvability in the resolved $k=3$ regime and partial progress toward the more challenging open $k=4$ setting, while highlighting current limitations.
\end{enumerate}

Section \ref{sec: k-server-bench} formalizes the potential-based certification task for WFA as a collection of automatically checkable inequalities over an explicit \emph{work-function graph}, and describes how we instantiate this framework on discretized circle metrics and their $k$-taxi augmentations. Section \ref{sec: experiments} then presents the experimental baselines and agentic methods, including a detailed study of task-design choices and ablations in the resolved $k=3$ testbed, followed by results and failure modes in the more challenging $k=4$ regime. We conclude with a discussion of limitations (\autoref{sec: Limitations}) and conclusions (\autoref{sec: Conclusions}), and defer related work to \autoref{sec: Related Work}.

\section{$k$-server-bench}
\label{sec: k-server-bench}

We begin with a practically oriented exposition of our $k$-server-bench, deferring the mathematical background and additional details to the \autoref{sec: Mathematical Background on Work-Function Graph}.

\paragraph{Problem Formulation.}
Given a \textit{finite} metric space $M$, represented by a distance matrix in $\mathbb{R}^{|M|\times |M|}$, and an integer $k$, we construct a finite directed graph $(V,E)$, commonly referred to as the \emph{work-function graph}. Each node $u\in V$ is associated with a vector $w_u\in\mathbb{R}^{|\mathcal{C}|}$ (for simplicity we define $w$ as a vector but in infinite case it is a function from \emph{configurations}, $\mathcal{C} \mapsto \mathbb{R}$). Each directed edge is parameterized by a triple $(u,r,v)$: $u$ is the start vertex, $r$ is an edge parameter, and the end vertex $v$ is uniquely determined by the pair $(u,r)$. Edges carry weights, denoted by $\nabla(u,r, c)$, where $c$ is the competitiveness factor we want to prove (default is $k$).

The objective is to find a \emph{potential function} $\Phi:\mathbb{R}^{|\mathcal{C}|}\to\mathbb{R}$ such that the following inequality holds for every edge:
\[
    \forall (u,r,v)\in E:\qquad \Phi(w_v)-\Phi(w_u)\;\ge\;\nabla(u,r, c).
\]
If all inequalities are satisfied, then $\Phi$ certifies $c$-competitiveness of the Work Function Algorithm (WFA) on $M$.

To the best of our knowledge, the explicit notion of a \textit{work-function graph} has not been formalized in the $k$-server literature. Rather, closely related graph-based viewpoints appear as folklore and have likely been used implicitly in earlier analyses \parencite{chrobak1991optimal,coester2021kserverconjectureunifyingpotential}.

\paragraph{Family of Metric Spaces.}
As suggested by the formulation above, the task is closely related to detecting negative cycles in a directed graph. Indeed, running the Bellman--Ford algorithm on $(V,E)$ yields a certificate in the form of node potentials $\Phi$, but this certificate is purely \textit{numerical} and tied to a single finite instance.

Our goal, in contrast, is to obtain a \textit{functional} potential: a parameterized form that generalizes across multiple metric instances drawn from the \textit{same family} and can plausibly be integrated into a subsequent proof. We illustrate this with the circle metric. Ultimately, we seek a certificate for WFA on the \textit{infinite} circle, but the associated work-function graph is infinite, making direct verification infeasible. Instead, we evaluate candidates on a sequence of finite discretizations: for each $m$, we place $m$ equidistant points on the circle to obtain a finite metric, with values of $m$ accessible to modern computation (e.g., $m=4,5,6,\dots$). The desired potential should apply \emph{without modification} across these discretized instances. Both empirically and mathematically, success on an expanding range of $m$ provides strong evidence that the same functional form extends to the infinite-circle limit.

To clarify the distinction between an \emph{implicit} Bellman--Ford certificate and a \emph{functional} potential, consider the simple example $\Phi(w) = \sum_{X \in \mathcal{C}} w(X)$.
If such a potential happened to satisfy the required inequalities, it could in principle be incorporated into an infinite-case argument, because it is given as an explicit function of the work function $w$. By contrast, the Bellman--Ford potential is merely a numerical assignment to the nodes of a particular finite graph---that is, a lookup table for one discretized instance. As such, it does not directly provide a functional expression that could be transferred to the infinite setting.

This evaluation is sound but incomplete: any detected violation definitively refutes a candidate, whereas a candidate with no detected violations is best viewed as a strong hypothesis rather than a complete infinite-circle proof. Even so, such a candidate would be of independent interest---especially because, to our knowledge, no potential is currently known that achieves zero violations for the $k=4$ circle case in our discrete formulation. Finally, across all instances we can currently probe, we find no evidence that the resulting inequality systems are infeasible. Concretely, we use Bellman--Ford--style negative-cycle checks, which require at least one full pass over the edge set and thus become prohibitively expensive for the largest instance we consider ($k=4, m=8$ with $k$-taxi augmentation), whose work-function graph has on the order of billions of edges.

\begin{wraptable}{r}{0.55\textwidth}
  \centering
  \caption{Summary of the pre-generated circle discretizations. Extended discussion: Appendix \ref{subsec: Work Function Graph: Implementation Details}.}
  \label{tab:benchmark size}
  \vspace{0.5\baselineskip}
  \begin{tabular}{c|c|c|c|c}
    \toprule
    $k$ & $m$ & $k$-taxi & edges & violations \\ 
    \midrule
    $3$ & $6$ & no & $2'100$ & $0$ \\
    $3$ & $8$ & no & $41'920$ & $0$ \\
    $3$ & $6$ & yes & $193'662$ & $0$\\
    $3$ & $8$ & yes & $20'068'416$ & $1$ \\
    \midrule
    $4$ & $6$ & no & $6'006$ & $0$ \\
    $4$ & $8$ & no & $261'200$ & $0$ \\
    $4$ & $6$ & yes & $7'000'602$ & $17$ \\
    $4$ & $8$ & yes & $7 \cdot 10^9$$^*$ & $30'000$$^*$ \\
    \bottomrule
  \end{tabular}
  
\end{wraptable}

\paragraph{Benchmark Instances.}
From a practical standpoint, for the family of circle metrics we provide several pre-generated instances (see \autoref{tab:benchmark size}). In addition, we employ a convenient augmentation mechanism introduced by \textcite{coester2021kserverconjectureunifyingpotential}, known as \emph{$k$-taxi requests}. Intuitively, $k$-taxi requests allow us to substantially enrich the work-function graph over the same underlying set of sampled circle points: we add additional nodes and edges that approximate carefully constructed (infinite) request sequences. This augmentation both enlarges the constraint set and makes the potential inequalities more stringent, thereby increasing the likelihood that a potential satisfying all constraints will extend to the infinite-circle setting.

\autoref{tab:benchmark size} also reports the number of violations incurred by the unifying potential of \textcite{coester2021kserverconjectureunifyingpotential}. That work pushed the frontier of \emph{solved} metric spaces toward the circle by unifying several potentials originally developed for special cases, including the line metric, weighted star metrics, and the case $k=2$. To the best of our knowledge this unifying potential remains the state of the art among human-proposed candidates for the case $k=4$.

Note that, the pipeline is not specific to the circle. It can be adapted to other metric families with minimal changes, essentially by replacing the metric-defining parameter matrix while keeping the rest of the generation and evaluation framework unchanged.

\paragraph{Submission and Scoring.}
Submissions are provided as Python code defining a function \texttt{potential} that takes as input $k$, a metric $M$ (represented as a real-valued matrix), and a \emph{work function} $w$ (a vector associated with each node), and returns a single real value. See Appendix~\ref{subsec: Implementation Details: Potential Definition} for the complete specification.

Given a submission, we evaluate $\Phi$ by computing potential values for all nodes of the given instances and then check all inequalities. The primary benchmark target is the \emph{number of violated inequalities} incurred by the submitted potential (lower is better, with the ideal target being zero violations). Or alternatively, we use violations \textit{score}: $1 - \frac{\text{violations}}{\text{edges}}$ (higher is better).

In addition to the headline score, we report several diagnostic metrics to facilitate analysis and debugging, including the correlation between the submitted potential and the instance-specific Bellman--Ford potential, violation counts stratified by the type of edge weight $\nabla(u,r,c)$ and several more. Full definitions and reporting details are provided in the Appendix \ref{subsec: Metrics Definition}.

\section{Experiments}
\label{sec: experiments}

\paragraph{Methods.}

We compare three solution-generation strategies: naive Best-of-$N$ sampling, \textsc{ShinkaEvolve} \parencite{lange2025shinkaevolveopenendedsampleefficientprogram}, and \textsc{LoongFlow} \parencite{wan2025loongflowdirectedevolutionarysearch}. 

\textbf{Best-of-$N$.} Given a fixed prompt, we sample $N$ independent candidate implementations of \texttt{potential} from an LLM, evaluate each candidate on the benchmark instances, and return the single candidate with the lowest number of violations. This baseline captures the gains achievable purely from increased inference-time sampling without any iterative refinement loop.

\textbf{Evolutionary search paradigm.} Both \textsc{ShinkaEvolve} and \textsc{LoongFlow} follow the standard LLM-driven program-evolution template: maintain an archive/population of previously evaluated candidates; propose new candidates by mutating, recombining, or otherwise editing existing programs (with the LLM acting as a structured mutation operator); evaluate candidates with an automated fitness signal (here, violation count and related diagnostics); and iterate via selection. See \autoref{sec: Method Description and Used Hyperparameters} for the details on hyperparameters and discussion on the method differences.

\begin{table}[!t]
    \centering
    \caption{
Perfect runs and final scores summary (higher is better) over 10 independent runs of Best-of-$N$, \textsc{Shinka}, and \textsc{LoongFlow} all using \texttt{gpt-oss-120b} capped at 100 evaluations, evaluated on two instances ($k=3$) with $m\in{6,8}$ and no $k$-taxi augmentations. Methods are given with the same initial solutions and task descriptions.  ``Perfect Runs'' reports the fraction of runs achieving perfect score for each corresponding metric. ``Final Score'' reports per-run maxima for the combined metric (multiplicative across instances) and each instance, aggregated as mean $\pm$ \textcolor{std}{std} across runs.
}
    \label{tab: k=3 generalization search}
    \vspace{3pt}

\begin{tabularx}{\textwidth}{>{\raggedright\arraybackslash}Xcccccc}
\toprule
\multirow{2}{*}{Method} & \multicolumn{3}{c}{Perfect Runs} & \multicolumn{3}{c}{Max Score} \\
\cmidrule(lr){2-4}\cmidrule(lr){5-7}
 & combined & $m=6$ & $m=8$ & combined & $m=6$ & $m=8$ \\
\midrule
Best-of-N & 0/10 & 0/10 & 0/10 & $0.907 \textcolor{std}{\pm 0.064}$ & $0.958 \textcolor{std}{\pm 0.027}$ & $0.948 \textcolor{std}{\pm 0.039}$ \\
Shinka & 0/10 & 4/10 & 0/10 & $0.913 \textcolor{std}{\pm 0.146}$ & $0.956 \textcolor{std}{\pm 0.079}$ & $0.949 \textcolor{std}{\pm 0.090}$ \\
LoongFlow & 2/10 & 5/10 & 2/10 & $0.938 \textcolor{std}{\pm 0.084}$ & $0.972 \textcolor{std}{\pm 0.038}$ & $0.964 \textcolor{std}{\pm 0.052}$ \\
\bottomrule
\end{tabularx}

\end{table}

\subsection{Testbed: $k = 3$}

We begin by calibrating the benchmark and stress-testing key protocol choices in the resolved $k=3$ setting. As discussed earlier, $k=3$ on the circle is solved, and the corresponding benchmark instances are substantially smaller than those in the $k=4$ regime. This makes $k=3$ a natural entry point for controlled comparisons, while still reflecting the core difficulty we face at $k=4$: discovering potentials that generalize across discretizations rather than overfitting a single instance.

\paragraph{Design Choices.}
We adopt two protocol choices aimed at the open $k=4$ regime --- evaluating \emph{search procedures} rather single constructions and restricting to a \emph{canonical potential} hypothesis class --- and validate them in the $k=3$ testbed. See \autoref{sec: k=3 testbed experiments} for full information and supporting evidences on the following ablations.

\textbf{Search vs.\ construction.}
Across the testbed, we consistently found that, iterative improvement of the \textit{search procedures} over potentials outperforms direct iterative improvement of constructions (potentials in our case). Besides better scores, this formulation provides an additional scaling axis (the search can simply be run longer) and better matches what LLMs are trained to do: describe and adapt procedures. In our setting, that makes ``tuning a search algorithm'' a more natural use of the budget than fine-grained tuning of a single constructed potential.

\textbf{Canonical potentials.}
We find that current methods make little progress without a strong hint in the form of a substantially reduced hypothesis class: even for the smallest setting ($k=3, m=6$), unrestricted methods never discover a \textit{perfect} potential. Accordingly, we adopt the \emph{canonical potential} representation from \textcite{huang2022deterministic3servercirclelimitation}, in which potentials are parameterized by small integers $h,n$, an index matrix of size $h\times k$, and a coefficient vector of size $\binom{n}{2}$. This form is computationally efficient to evaluate and captures the structure of most of the known potentials (Appendix~\ref{subsec:canonical_potential}). Imposing this restriction improves performance dramatically and enables methods to reliably discover perfect potentials.

\paragraph{Stress test.}
After the ablations above, we run a $k=3$ generalization stress test designed to mimic the pressure of the $k=4$ regime: a single potential must satisfy the constraint suites for both $m=6$ and $m=8$. The best methods can still reach zero detected violations on the combined task, indicating that the proposed search-over-canonical-templates protocol admits feasible solutions. At the same time, \autoref{tab: k=3 generalization search} shows that aggregate scores (mean $\pm$ std.) overlap across methods, whereas the rate of zero-violation solutions (``Perfect Runs'') ranks them cleanly, supporting the usefulness of the proposed task as a \textit{benchmark}.

\subsection{Open case: $k = 4$}

We next turn to the open $k=4$ case on the circle with $k$-taxi augmentation. Here the goal is not a systematic comparison of methods, but to push beyond the best known human-designed baseline, namely the ``unifying potential'' of \textcite{coester2021kserverconjectureunifyingpotential}. Despite extensive search, we did not find a zero-violation potential at competitiveness $c=k$. The best candidate we obtained at $c=k$ incurs only $3$ violations out of roughly $7$ million checked inequalities on the $(k=4,m=6)$ augmented instance (cf.\ \autoref{tab:benchmark size}), while a separate candidate achieves zero violations on $(k=4,m=6)$ at the weaker ratio $c=k+1$ but fails to generalize to $m=8$. The summary of $k$-competitive candidates is given in the \autoref{tab: final leaderboard k = 4}.

\begin{wraptable}{r}{0.5\textwidth}
\centering
\caption{Best-found $k$-competitive potential. Candidates are evaluated on the $k$-taxi augmented $k=4,m=6$ circle metric.}
\label{tab: final leaderboard k = 4}
\vspace{4pt}

\begin{tabularx}{0.45\textwidth}{@{}l Y@{}}
\toprule
Found by & Violations \\
\midrule
Codex (gpt-5.2) & 3  \\
\textsc{Shinka} (gpt-oss-120b) & 14 \\
Humans \parencite{coester2021kserverconjectureunifyingpotential} & 17 \\
\bottomrule
\end{tabularx}
\end{wraptable}

Moving from $k=3$ to the open $k=4$ regime exposes several bottlenecks for evolutionary and agentic search. The baseline is already extremely close to feasibility (only $17$ violations out of $\approx 7$ million constraints), so progress requires operating in the ``tens of violations out of millions'' regime. At the same time, full evaluation is expensive, which sharply limits candidate throughput per iteration, and naive approximations such as evaluating on a random subset of constraints are ineffective at this resolution, since they do not reliably discriminate near-feasible candidates.

To make exploration tractable, we introduced an additional degree of freedom in the interface: agents are allowed to define and optimize a \emph{proxy} objective during search, leaving full evaluation only for the best-found candidates. This improves throughput but comes with practical tradeoffs: the evaluation contract becomes more complex, solutions and prompts grow, and token consumption increases, which raises cost for frontier models significantly and increases failure rates from brittle long-context code edits (details in the \autoref{sec: k = 4 scaling challenges}).

We also explored stronger restrictions on the hypothesis class beyond canonical potentials. In one regime, search was constrained to perturbations of the unifying potential. To construct a second structured hint, we performed an iterative GPT-5 Pro–assisted analysis aimed at suggestions on generalizing the $k=3$ circle potential of \textcite{huang2022deterministic3servercirclelimitation} to the $k=4$ case.

Overall, both with the canonical hint alone and with the additional structured hint, the best solutions plateaued at approximately 5{,}000 violations (see \autoref{sec: k = 4 scaling challenges} for details). When the canonical potential itself was revealed, all methods became stuck in a sharp local minimum around this value. Only a single run of \textsc{ShinkaEvolve} (using \texttt{gpt-oss-120b}) produced a modest improvement to 14 violations. However, this improvement resulted from a minor modification and did not lead to further progress, suggesting that this direction is unlikely to yield a complete solution.

\paragraph{Human-guided search.} However, the second hint proved most useful when paired with a coding-agent plus human-in-the-loop workflow, particularly Codex (powered by \texttt{gpt-5.2}). We gave Codex the same task description and the ability to evaluate candidate solutions autonomously, while human feedback during the loop was limited to conceptual guidance about the $k$-server domain rather than code-level edits. A key contribution of Codex in this process was the implementation of an effective proxy evaluation routine: it maintained a cache of ``hard'' edges that were frequently violated and used early stopping to discard weak candidates quickly. This substantially increased candidate throughput by allowing the search to screen many proposals cheaply before running full verification. Leveraging this proxy, the workflow produced the current best $c=k$ candidate (3 violations on the augmented $k=4,m=6$ instance) and also uncovered a $c=k+1$ potential with zero violations on $k=4,m=6$, though it does not generalize to the augmented $m=8$ setting. We defer the discovered potential structures to \autoref{sec: Potential, Codex with Human-in-the-Loop} and additional details on to the \autoref{sec: Codex-found Solution}.

\section{Limitations}
\label{sec: Limitations}

Our benchmark inherits several fundamental limitations from the underlying proof-search problem. First, instance size grows rapidly with $k$, discretization level $m$, and augmentations, leading to very rapid growth in the number of states and constraints that must be checked. This makes exhaustive evaluation increasingly expensive and forces the use of proxy objectives or partial checking at larger scales. Second, while the circle (and its $k$-taxi augmentation) provides a particularly crisp and challenging stress test---including a strong human baseline that is close to feasibility---there are relatively few metric families with similarly well-motivated, discretizable open cases and comparable ``near-miss'' potentials. Third, our representation assumes candidate potentials are efficiently computable programs; however, a correct potential might require structure that is difficult to express or evaluate efficiently in Python under our current interface. Finally, although canonical parameterizations are computationally convenient, existing results suggest that broad canonical classes have intrinsic limitations \parencite{huang2022deterministic3servercirclelimitation}, and over-reliance on them may bias search away from genuinely new potential forms.

\section{Conclusions}
\label{sec: Conclusions}

We introduced an executable environment that operationalizes potential-function proof search for the $k$-server conjecture on the circle as a code-driven optimization task with automatically checkable constraints and a clear target of zero violations. Across experiments, we find that dense automated feedback enables rapid iteration and meaningful progress, including improvements beyond strong human-designed baselines in the open $k=4$ regime under our formulation. More broadly, our results suggest that, before attempting fully autonomous long-horizon runs, it can be valuable to \emph{calibrate} the benchmark via interactive sessions with coding agents and human-in-the-loop workflows, both to refine the interface (e.g., evaluation/proxy design) and to identify effective hypothesis classes.

We hope that releasing this evaluation suite will accelerate progress toward a proof in the open $k=4$ circle case, and more broadly encourage the development of similar executable benchmarks for other open mathematical problems that admit automatic checking. Our broader aim is to foster settings in which progress can be measured reliably, hypotheses can be tested at scale, and modern AI systems can be tightly integrated with classical mathematical workflows.

\printbibliography

\appendix

\section{Related Work}
\label{sec: Related Work}

Recent AI systems have shown growing promise in formal scientific domains, including both physics \parencite{brenner2026solvingopenproblemtheoretical,guevara2026singleminusgluontreeamplitudes} and mathematics \parencite{georgiev2025mathematicalexplorationdiscoveryscale,dobriban2025solvingresearchproblemmathematical,bubeck2025earlyscienceaccelerationexperiments,feng2026semiautonomousmathematicsdiscoverygemini,feng2026autonomousmathematicsresearch,sothanaphan2026resolutionerdhosproblem728,chen2026felsconjecturesyzygiesnumerical,chen2026paritykdifferentialsgenuszero}. Alongside these application-oriented results, there is a substantial line of work on AI systems and search algorithms for mathematical and scientific discovery \parencite{Romera-Paredes2024-cn,novikov2025alphaevolvecodingagentscientific,openevolve,lange2025shinkaevolveopenendedsampleefficientprogram,surina2025algorithmdiscoveryllmsevolutionary,jiang2026deltaevolveacceleratingscientificdiscovery,wan2025loongflowdirectedevolutionarysearch}, as well as on applying such systems to concrete open mathematical tasks \parencite{novikov2025alphaevolvecodingagentscientific,georgiev2025mathematicalexplorationdiscoveryscale,nagda2025reinforcedgenerationcombinatorialstructures,nagda2026reinforcedgenerationcombinatorialstructures}. A large fraction of these works focus on open-ended construction tasks, such as improving upper or lower bounds, where progress is measured by small numerical improvements. While this paradigm is valuable, it can be difficult to interpret scientifically and often yields benchmarks that do not clearly distinguish methods. In contrast, our setting has a sharp and meaningful target---zero violations on the generated instances---together with a direct connection to a longstanding open problem in competitive analysis. In particular, success on our benchmark would constitute progress toward the $k$-server conjecture, and the closely related $k=3$ circle case has already led to a substantial theoretical result published in a reputable venue \parencite{huang2022deterministic3servercirclelimitation}.

A second line of work develops benchmark suites built directly around research-level mathematical problems \parencite{glazer2025frontiermathbenchmarkevaluatingadvanced,epochFrontierMathOpenProblems,abouzaid2026proof,feng2026semiautonomousmathematicsdiscoverygemini,feng2026autonomousmathematicsresearch}. FrontierMath \parencite{glazer2025frontiermathbenchmarkevaluatingadvanced} focuses primarily on difficult mathematical problems with known answers, while its more recent Open Problems track \parencite{epochFrontierMathOpenProblems} extends this paradigm to open questions with verifiable final answers. These benchmarks are philosophically close to our design in that they aim for objective verification, but they also have important limitations for agent research: parts of the evaluation stack remain closed or paywalled, which restricts independent experimentation, and some verifiers can certify correctness only with high confidence rather than absolute certainty. Our benchmark shares the general challenge that verification may be incomplete in the infinite setting, but differs in being fully executable and openly inspectable on finite instances. Related proof-oriented benchmarks, such as First Proof \parencite{abouzaid2026proof} and Erd\H{o}s Bench \parencite{feng2026autonomousmathematicsresearch,feng2026semiautonomousmathematicsdiscoverygemini}, evaluate the ability of models to produce complete mathematical proofs. Compared with these settings, our task is more interactive: agents receive dense, instance-specific feedback, can iteratively improve candidate constructions, and do not require human judgment in the evaluation loop. This makes large-scale automated experimentation substantially easier and, we hope, more conducive to rapid progress on the underlying mathematical problem.

Finally, within competitive analysis itself, there has been recent work using AI to study online algorithms, for example in online matching where reinforcement learning is used to tighten competitive bounds \parencite{pmlr-v235-zhang24bf}. Our setting is qualitatively different. In online matching, the emphasis is often on finding improved algorithms or counterexamples, whereas for the $k$-server conjecture there is already a canonical candidate algorithm---the Work Function Algorithm---and the central challenge is to prove its competitiveness. Accordingly, our benchmark is aimed not at algorithm discovery, but at discovering proof-relevant potential functions that could ultimately contribute to resolving a central conjecture in online algorithms.

\section{Mathematical Background on Work-Function Graph}
\label{sec: Mathematical Background on Work-Function Graph}

\subsection{Work Function}

We call a multiset of $k$ points in the metric space $M$ a configuration (of servers) and denote $\mathcal{C}$ with a set of all possible configurations.

The central theoretical object in the $k$-server problem is the \emph{work function}.

\begin{definition}
The work function $w_t : \mathcal{C} \to \mathbb{R}$ maps a configuration $C$ to the minimum cost of serving the request sequence $r_1,\dots,r_t$ and ending in configuration $C$.
\end{definition}

It satisfies the recurrence
\[
    w_{t+1}(X)
    = T_r(w_t)(X) =
    \min_{\substack{Y \in \mathcal{C}\\ r_{t+1}\in Y}}
    \left\{
        w_t(Y) + d(X,Y)
    \right\},
\]
where the minimum is taken over all configurations $Y$ that contain $r_{t+1}$ and $T_r$ is the convenient notation for operator returning the ``next'' work function. Here, with slight abuse of notation, $d(X,Y)$ denotes the minimum matching cost between configurations $X$ and $Y$.

The work function enjoys several important structural properties, including monotonicity, quasi-convexity, and $1$-Lipschitzness \parencite{Koutsoupias2009-jl}.

\begin{definition}[Extended Cost]
Let $w$ be a work function. The \emph{extended cost} is defined as
\[
    \nabla(w,r)
    :=
    \max_{X \in \mathcal{C}}
    \bigl(
        T_r(w)(X) - w(X)
    \bigr).
\]
\end{definition}

\begin{definition}[OPT Increase]
Let $w$ be a work function. The \emph{OPT increase} is defined as
\[
    \Delta \operatorname{OPT}(w,r)
    :=
    \min_{X \in \mathcal{C}} T_r(w)(X)
    -
    \min_{X \in \mathcal{C}} w(X).
\]
\end{definition}

\subsection{Competitiveness of the WFA}

The Work Function Algorithm (WFA) uses the work function explicitly to determine its moves. While WFA is defined algorithmically, the work function itself is a more general analytical object.

If the current configuration after $t$ requests is $X_t$, WFA chooses the next configuration according to
\[
    X_{t+1}
    =
    \operatorname*{arg\,min}_{Y \in \mathcal{C}}
    \left\{
        w_{t+1}(Y) + d(X_t,Y)
    \right\}.
\]

\begin{lemma}
\label{lemma:main_lemma}
For any request sequence $\rho = (r_1,\dots,r_t)$,
\[
    \operatorname{OPT}(\rho)
    +
    \operatorname{WFA}(\rho)
    \le
    \sum_{i=0}^{t - 1}
    \nabla(w_i,r_{i+1}).
\]
\end{lemma}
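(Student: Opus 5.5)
This is the standard Koutsoupias--Papadimitriou ``extended cost'' bound. I will prove it step by step, bounding WFA's cost of each request by a work-function difference and then telescoping. Two normalizations come first. The initial work function is $w_0(X)=d(X_0,X)$, so $\min_X w_0(X)=w_0(X_0)=0$. Then $\operatorname{OPT}(\rho)=\min_X w_t(X)$, and this equals the telescoping sum $\sum_{i}\Delta\operatorname{OPT}(w_i,r_{i+1})$.

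\textbf{Key step 1 (WFA's per-step cost).} Write $w=w_i$, $w'=w_{i+1}=T_{r_{i+1}}(w)$, and let $X=X_i$, $Y=X_{i+1}$. I will first show that $Y$ contains $r_{i+1}$ and that $w'(Y)=w(Y)$. If $Y\not\ni r$, the recurrence gives some $Z\ni r$ with $w'(Y)=w(Z)+d(Z,Y)$. Then
\[
w'(Z)+d(X,Z)\le w(Z)+d(X,Y)+d(Y,Z)=w'(Y)+d(X,Y),
\]
so we may choose the minimizer with $r\in Y$; ties are broken this way. For $Y\ni r$, taking the candidate $Y$ itself gives $w'(Y)\le w(Y)$, and monotonicity of work functions gives $w'\ge w$ pointwise, so $w'(Y)=w(Y)$. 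Next, the WFA rule gives
\[
w'(Y)+d(X,Y)\le w'(X)+d(X,X)=w'(X).
\]
Hence
\[
d(X,Y)\le w'(X)-w'(Y)=w_{i+1}(X_i)-w_i(X_{i+1}).
\]

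\textbf{Key step 2 (charging to the extended cost).} I rewrite the bound from step 1 as
\[
d(X_i,X_{i+1})\le \bigl(w_{i+1}(X_i)-w_i(X_i)\bigr)+\bigl(w_i(X_i)-w_{i+1}(X_{i+1})\bigr),
\]
using $w_i(X_{i+1})=w_{i+1}(X_{i+1})$. The first bracket is at most $\nabla(w_i,r_{i+1})$ by definition. Summing over $i$, the second bracket telescopes to $w_0(X_0)-w_t(X_t)=-w_t(X_t)$. This gives
\[
\operatorname{WFA}(\rho)\le\sum_i\nabla(w_i,r_{i+1})-w_t(X_t)\le\sum_i\nabla(w_i,r_{i+1})-\min_X w_t(X).
\]
Since $\min_X w_t(X)=\operatorname{OPT}(\rho)$, this is exactly the claimed inequality.

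\textbf{Main obstacle.} The delicate point is step 1: showing that WFA's chosen configuration covers the request and satisfies $w_{i+1}(X_{i+1})=w_i(X_{i+1})$. This needs the monotonicity property $w_{i+1}\ge w_i$ and a tie-breaking argument for the argmin. Monotonicity itself follows from the recurrence, because $w_i(Y)+d(X,Y)\ge w_i(X)$ by the $1$-Lipschitz property. Everything else is telescoping. I also need to be careful that the paper's convention for $\operatorname{OPT}(\rho)$ is $\min_X w_t(X)$ with the starting configuration fixed. If instead OPT includes an additive constant, the statement should be read up to that constant.
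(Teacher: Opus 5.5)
Your proof is correct. It is the standard Koutsoupias--Papadimitriou pseudocost argument, and the paper states this lemma without proof, relying on that literature, so there is no different route to compare against.

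One simplification: you never actually need $w_{i+1}(X_{i+1})=w_i(X_{i+1})$ or monotonicity. The WFA inequality $d(X_i,X_{i+1})\le w_{i+1}(X_i)-w_{i+1}(X_{i+1})$ already splits as $\bigl(w_{i+1}(X_i)-w_i(X_i)\bigr)+\bigl(w_i(X_i)-w_{i+1}(X_{i+1})\bigr)$ just by adding and subtracting $w_i(X_i)$. Your covering argument is still worthwhile, but only to guarantee that WFA actually serves each request, since the paper's minimum ranges over all of $\mathcal{C}$.
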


A standard way to prove $c$-competitiveness of WFA is to construct a potential $\Phi$ on work functions.

\begin{theorem}
\label{theorem:main_theorem}
Suppose there exists $\Phi$ such that:
\begin{enumerate}
    \item For all $w$ and requests $r$,
    \[
        \Phi(w') - \Phi(w)
        \ge
        \nabla(w,r),
    \]
    where $w' = T_r(w)$ is the work function after $r$,
    \item For all $w$,
    \[
        \Phi(w)
        \le
        (c+1)\min_{X\in\mathcal{C}} w(X)
        + \mathrm{const}.
    \]
\end{enumerate}
Then WFA is $c$-competitive.
\end{theorem}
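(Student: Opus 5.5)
The proof telescopes condition 1 along the request sequence and then combines it with Lemma~\ref{lemma:main_lemma} and condition 2.

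Fix a request sequence $\rho=(r_1,\dots,r_t)$ with work functions $w_0,w_1,\dots,w_t$, where $w_{i+1}=T_{r_{i+1}}(w_i)$ and $w_0$ is the initial work function, i.e.\ $w_0(X)=d(X_0,X)$ for the initial configuration $X_0$.

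\textbf{Step 1 (telescoping).} Applying condition 1 to each pair $(w_i,r_{i+1})$ and summing over $i=0,\dots,t-1$ gives
\[
\sum_{i=0}^{t-1}\nabla(w_i,r_{i+1})\le \Phi(w_t)-\Phi(w_0).
\]

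\textbf{Step 2 (relate to costs).} Lemma~\ref{lemma:main_lemma} bounds the left-hand side from below, so
\[
\operatorname{OPT}(\rho)+\operatorname{WFA}(\rho)\le \Phi(w_t)-\Phi(w_0).
\]

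\textbf{Step 3 (upper bound on $\Phi(w_t)$).} By definition of the work function, $\min_X w_t(X)=\operatorname{OPT}(\rho)$, the optimal offline cost of serving $\rho$ starting from $X_0$. Condition 2 then gives
\[
\Phi(w_t)\le (c+1)\operatorname{OPT}(\rho)+\mathrm{const}.
\]

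\textbf{Step 4 (conclude).} Substituting Step 3 into Step 2 and cancelling one copy of $\operatorname{OPT}(\rho)$ yields
\[
\operatorname{WFA}(\rho)\le c\,\operatorname{OPT}(\rho)+\mathrm{const}-\Phi(w_0).
\]
Here $w_0$ depends only on the starting configuration and not on $\rho$, so $\mathrm{const}-\Phi(w_0)$ is an additive constant. This is exactly $c$-competitiveness.

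\textbf{Main obstacle.} The telescoping itself is mechanical; the delicate points are bookkeeping.
\begin{itemize}
\item The identification $\min_X w_t(X)=\operatorname{OPT}(\rho)$ must match the convention used in Lemma~\ref{lemma:main_lemma}, namely the same start configuration and the same indexing of $w_i$.
\item $\Phi(w_0)$ must be finite and independent of $\rho$. For a fixed initial configuration this is immediate. If one wants uniformity over starting configurations, one also needs $\Phi(w_0)$ bounded below over the finitely many possible $w_0$; in a finite metric this is automatic.
\end{itemize}
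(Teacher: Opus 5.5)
Your proof is correct and follows the same route as the paper: telescope condition 1, chain it with Lemma~\ref{lemma:main_lemma}, and then use condition 2 to absorb constants. You are more explicit than the paper's one-line argument, since you spell out the identification $\min_X w_t(X)=\operatorname{OPT}(\rho)$ and the $\rho$-independence of $\Phi(w_0)$, and your final index $w_t$ is the right one (the paper writes $w_{t+1}$).
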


\begin{proof}
Using Lemma~\ref{lemma:main_lemma} and telescoping,
\[
\operatorname{OPT}(\rho)
+
\operatorname{WFA}(\rho)
\le
\sum_i \nabla(w_i,r_{i+1})
\le
\Phi(w_{t+1}) - \Phi(w_0),
\]
and the second property yields the claim after absorbing constants.
\end{proof}

\subsection{Work-Function Graph}

We first note that work functions are shift invariant in the following sense:
\begin{lemma}[Shift-Invariance]
    For every $w: \mathcal{C} \mapsto \mathbb{R}$, request $r \in M$ and scalar $\alpha \in \mathbb{R}$ we have:
    \[
        T_r(w + \alpha \cdot \mathbf{1}) = T_r(w) + \alpha\cdot\mathbf{1},
    \]
    where $+\alpha\cdot\mathbf{1}$ means adding $\alpha$ to every value of the work function.

    Consequently,
    \[
        \nabla(w + \alpha\cdot 1, r) = \nabla(w, r), \quad \Delta\mathrm{OPT}(w + \alpha\cdot\mathbf{1}, r) = \Delta\mathrm{OPT}(w, r)
    \]
\end{lemma}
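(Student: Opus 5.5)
The plan is to argue directly from the recurrence defining $T_r$, and then to deduce the two consequences by substituting into the definitions of the extended cost and the OPT increase.

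\textbf{Step 1 (the operator commutes with constant shifts).} Fix any configuration $X \in \mathcal{C}$. By definition,
\[
T_r(w+\alpha\cdot\mathbf{1})(X) = \min_{\substack{Y\in\mathcal{C}\\ r\in Y}} \bigl\{ w(Y) + \alpha + d(X,Y) \bigr\}.
\]
Because $\alpha$ does not depend on $Y$, it factors out of the minimum. What remains is exactly $T_r(w)(X) + \alpha$. This holds for every $X$, which gives the first identity. The only thing to check is that the minimum runs over a nonempty set, namely the configurations containing $r$, so the identity is not vacuous. In the finite setting the minimum is attained. In the infinite setting one replaces $\min$ by $\inf$, and the same argument applies because adding a constant commutes with infima.

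\textbf{Step 2 (invariance of $\nabla$).} Substitute Step 1 into the definition of the extended cost:
\[
\nabla(w+\alpha\cdot\mathbf{1}, r) = \max_X \bigl( T_r(w)(X) + \alpha - w(X) - \alpha \bigr).
\]
The two $\alpha$ terms cancel pointwise in $X$ before the maximum is taken, so the result equals $\nabla(w,r)$.

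\textbf{Step 3 (invariance of $\Delta\mathrm{OPT}$).} Use the fact that $\min_X (f(X)+\alpha) = \min_X f(X) + \alpha$. Applying this to both terms in the definition gives
\[
\Delta\mathrm{OPT}(w+\alpha\cdot\mathbf{1},r) = \bigl(\min_X T_r(w)(X) + \alpha\bigr) - \bigl(\min_X w(X) + \alpha\bigr),
\]
which equals $\Delta\mathrm{OPT}(w,r)$.

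None of these steps presents a real obstacle. The only point needing care is the bookkeeping in Step 1: the shift must be applied to $w(Y)$ inside the minimum, not to the distance term $d(X,Y)$, and it must be uniform over all configurations.
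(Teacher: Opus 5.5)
Your proposal is correct. The paper states this lemma without proof, treating it as immediate from the recurrence for $T_r$ and the definitions of $\nabla$ and $\Delta\mathrm{OPT}$. Your argument supplies exactly that routine verification: pull the constant out of the minimum, then cancel it inside the $\max$ and across the two minima.
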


\begin{definition}[Normalized Work Function]
    For a work function $w$ we call \[\widehat{w} = w - \min\limits_{X\in\mathcal{C}} w(X) \cdot \mathbf{1}\] to be normalized work function.    
\end{definition}

\begin{definition}[Reachable state]
    Normalized work function $\widehat{w}$ is reachable from $w_0$ if there exists a finite sequence of requests $r_1, ..., r_n$ such that $\widehat{w} = \widehat{w_n}$, where $w_i = T_{r_i}(w_{i - 1})$
\end{definition}

\begin{lemma}[Finite number of Reachable states]
    For a finite metric space $M$ with integer (for simplicity) distances, the number of reachable normalized work functions from an integer-valued work function, $w_0$, is finite.
\end{lemma}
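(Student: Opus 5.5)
The plan is to confine every reachable normalized work function to a finite set: integer-valued functions on $\mathcal{C}$ with values in a bounded range. Integrality comes from the recurrence, and boundedness from the Lipschitz property recalled after the recurrence.

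\textbf{Step 1 (integrality).} Since $M$ is finite, $\mathcal{C}$ (multisets of size $k$) is finite. If all distances in $M$ are integers, then every matching cost $d(X,Y)$ between configurations is an integer. If $w_t$ is integer-valued, then
\[
T_r(w_t)(X)=\min_{Y\ni r}\{w_t(Y)+d(X,Y)\}
\]
is a minimum over a finite set of integers, hence an integer. By induction from the integer-valued $w_0$, every $w_n$ is integer-valued. The subtracted quantity $\min_X w_n(X)$ is then an integer, so every $\widehat{w_n}$ is integer-valued and nonnegative with minimum $0$.

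\textbf{Step 2 (boundedness).} I would show $|w_n(X)-w_n(Y)|\le d(X,Y)$ for $n\ge1$, i.e.\ the $1$-Lipschitz property of work functions cited from \textcite{Koutsoupias2009-jl}. To stay self-contained I would verify it directly. For $n\ge 1$, take a minimizer $Z$ for $w_n(Y)$. Then
\[
w_n(X)\le w_{n-1}(Z)+d(X,Z)\le w_{n-1}(Z)+d(Z,Y)+d(Y,X)=w_n(Y)+d(X,Y),
\]
using the triangle inequality for the matching distance, which holds since matching costs form a metric on multisets. Setting $D=\max_{X,Y\in\mathcal{C}} d(X,Y)$, it follows that for $n\ge1$ every normalized $\widehat{w_n}$ takes values in $\{0,1,\dots,D\}$. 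For $n=0$ there is just the single state $\widehat{w_0}$.

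\textbf{Step 3 (counting).} The reachable normalized states therefore lie in
\[
\{\widehat{w_0}\}\cup\{0,\dots,D\}^{\mathcal{C}},
\]
which has at most $1+(D+1)^{|\mathcal{C}|}$ elements, so it is finite. The only point needing care is the Lipschitz bound in Step 2, specifically that $d(\cdot,\cdot)$ on configurations satisfies the triangle inequality. I would settle this by composing optimal matchings: a bijection $X\to Z$ followed by a bijection $Z\to Y$ is a bijection $X\to Y$ whose cost is at most the sum, by the triangle inequality in $M$. The shift-invariance lemma is used only implicitly, to justify that normalization commutes with the dynamics, so the state space is well defined.
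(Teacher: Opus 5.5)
Your proof is correct and follows the same route as the paper. Both arguments show that $T_r$ preserves integrality, that the $1$-Lipschitz property bounds normalized values by $\max_{X,Y} d(X,Y)$, and that bounded integer-valued functions on the finite set $\mathcal{C}$ form a finite set; your direct proof of the Lipschitz bound for $n\ge 1$ and separate handling of $\widehat{w_0}$ (which need not be Lipschitz) are small refinements that the paper's proof glosses over.
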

\begin{proof}
    First, we note that the application of $T_r$ operator keeps the work function to be integer valued since all distances are as well integers.

    Next we note that because of Lipschitzness of the work function all normalized work functions values in the given metric space are bounded:

    \[
        \forall w, X: \widehat{w}(X) \leq \max\limits_{X, Y \in \mathcal{C}} d(X, Y)
    \]

    Given that all values of the normalized work functions are bounded and integer valued we have that there are only finite number of reachable normalized work functions.
\end{proof}

\begin{definition}[Work Function Graph]
    The work function graph is the directed graph $G = (V, E)$ whose vertex set is the finite set $V$ of reachable normalized work functions and where for each $\widehat{w} \in V$ and request $r \in M$ we add the edge
    \[
        e = (\widehat{w_u}, r, \widehat{w_v}) \quad \text{whenever} \quad \widehat{w_v} = \widehat{T_r(\widehat{w_u})}
    \]
    For such an edge, define its weight
    \[
        \mathrm{weight}_c(u, r, v) = \nabla(w_u, r) - (c + 1) \Delta\mathrm{OPT}(w_u, r)
    \]
\end{definition}

\begin{lemma}[Normalized Potential Criterion]
    Suppose $\Psi: V \mapsto \mathbb{R}$ satisfies
    \[
        \Psi(\widehat{w_v}) - \Psi(\widehat{w_u}) \geq \mathrm{weight}_c(u, r, v)
    \]
    for every edge $e = (u, v, r)\in E$. Define
    \[
        \Phi(w) := \Psi(\widehat{w}) + (c + 1) \min\limits_{X\in\mathcal{C}} w(X)
    \]
    Then for every reachable work function $w$ and $w' = T_r(w)$,
    \[
        \Phi(w') - \Phi(w) \geq \nabla(w, r) \quad \text{and} \quad \Phi(w) \leq (c + 1) \min\limits_{X} w(X) + B,
    \]
    for some constant $B$ that does not depend on the request sequence to reach $w$.
\end{lemma}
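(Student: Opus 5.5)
The plan is to verify the two claimed properties directly from the definition of $\Phi$, using Shift-Invariance to move between $w$ and its normalization $\widehat{w}$. Fix a reachable work function $w$ and a request $r$, and set $w' = T_r(w)$, $m = \min_X w(X)$ and $m' = \min_X w'(X)$. By definition, $\Delta\mathrm{OPT}(w,r) = m' - m$.

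\textbf{Step 1 (identify the edge).} Since $w$ is reachable, $\widehat{w}$ is a vertex of $V$. By Shift-Invariance, $T_r(\widehat{w}) = T_r(w) - m\cdot\mathbf{1} = w' - m\cdot\mathbf{1}$. Normalizing this gives $\widehat{T_r(\widehat{w})} = \widehat{w'}$, so $(\widehat{w}, r, \widehat{w'})$ is an edge of $G$. In particular $\widehat{w'} \in V$, and the hypothesis on $\Psi$ applies to this edge. Shift-Invariance also gives $\nabla(\widehat{w},r) = \nabla(w,r)$ and $\Delta\mathrm{OPT}(\widehat{w},r) = \Delta\mathrm{OPT}(w,r)$. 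Hence the weight of the edge, which is defined via a representative $w_u$ of the vertex, equals $\nabla(w,r) - (c+1)(m'-m)$ regardless of which representative is used. I expect this representative-independence of the weight to be the only point needing care.

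\textbf{Step 2 (telescoping inequality).} Compute
\[
\Phi(w') - \Phi(w) = \Psi(\widehat{w'}) - \Psi(\widehat{w}) + (c+1)(m' - m) \ge \nabla(w,r) - (c+1)(m'-m) + (c+1)(m'-m) = \nabla(w,r).
\]

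\textbf{Step 3 (upper bound).} The previous lemma shows that, for a finite metric with integer distances, $V$ is finite. Therefore $B := \max_{v\in V}\Psi(v)$ is a finite constant that depends only on $G$, not on the request sequence. For any reachable $w$ we have $\widehat{w}\in V$, so
\[
\Phi(w) = \Psi(\widehat{w}) + (c+1)m \le (c+1)\min_X w(X) + B.
\]
If one wants to avoid the integrality assumption, it suffices that $\Psi$ is bounded above on $V$; finiteness gives exactly this.
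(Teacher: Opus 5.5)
Your proof is correct and follows the paper's argument essentially verbatim: expand $\Phi(w')-\Phi(w)$ as $\Psi(\widehat{w'})-\Psi(\widehat{w})+(c+1)\Delta\mathrm{OPT}(w,r)$, apply the edge inequality, and take $B=\max_{v\in V}\Psi(v)$ using finiteness of $V$. Your Step 1 makes explicit a point the paper's proof uses without comment: via Shift-Invariance, $(\widehat{w},r,\widehat{w'})$ is actually an edge of $G$, and its weight does not depend on the chosen representative.
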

\begin{proof}
    By definition,
    \[
        \Phi(w') - \Phi(w) = \Psi(\widehat{w'}) - \Psi(\widehat{w}) + (c + 1) \Delta\mathrm{OPT}(w, r)
    \]
    The assumed inequality gives:
    \[
        \Phi(w') - \Phi(w) \geq  \mathrm{weight}_c(u, r, v) + (c + 1) \Delta\mathrm{OPT}(w, r) = \nabla(w, r)
    \]

    Moreover, since $V$ is finite we can set $B = \max\limits_{v \in V} \Psi(\widehat{w_v})$, then since $w$ is reachable we have:
    \[
        \Phi(w) = \Psi(\widehat{w}) + (c + 1)\min\limits_{X \in \mathcal{C}} w(X) \leq (c + 1)\min\limits_{X \in \mathcal{C}} w(X) + B
    \]
\end{proof}

Therefore combining all these facts we get the finiteness of the BFS procedure to enumerate all possible normalized work functions, the sufficient of computing potential on the normalized work functions. So the candidate satisfying all inequalities in the Work Function Graph of the finite metric space with integer values constitutes the competitiveness of the Work Function Algorithm on this metric space.

\subsection{Implementation Details}
\label{subsec: Work Function Graph: Implementation Details}

Enumerating the largest instance, namely $k=4$, $m=8$ with $k$-taxi augmentation, via a straightforward BFS required prohibitive memory. To make enumeration feasible, we used Bloom filters to maintain the set of visited nodes. As a result, the counts reported in \autoref{tab:benchmark size} are not exact; they should be interpreted as lower bounds.

In addition, for instances with $k$-taxi requests we applied an extra node-filtering procedure exploiting the rotational and reflection symmetries of the circle, which further reduced the effective number of \emph{distinct} nodes. 

\section{Potential Forms}

\subsection{Canonical Potential}
\label{subsec:canonical_potential}

Let $n \in \mathbb{N}$, an index matrix $I \in \mathbb{Z}^{h \times k}$, and a symmetric coefficient matrix $C \in \mathbb{R}^{n \times n}$.  
We define the \emph{canonical potential} parameterized by auxiliary points $a_1,\dots,a_n \in M$ as
\[
    \Phi(w; a_1,\dots,a_n)
    =
    \sum_{i=1}^{h}
        w\!\bigl(a_{I[i,1]}, \dots, a_{I[i,k]}\bigr)
    -
    \sum_{1 \le i < j \le n}
        C[i,j]\, d(a_i,a_j),
\]
where:
\begin{itemize}
    \item $I[i,j]$ denotes the $(i,j)$-th entry of the index matrix,
    \item $w(a_{i_1},\dots,a_{i_k})$ denotes the value of the work function at configuration $(a_{i_1},\dots,a_{i_k})$,
    \item negative indices are interpreted via the convention $a_{-i} = \overline{a_i}$ (the antipode of $a_i$ on the circle).
\end{itemize}

The final potential is defined by minimizing over the auxiliary points:
\[
    \Phi(w)
    =
    \min_{a_1,\dots,a_n \in M}
    \Phi(w; a_1,\dots,a_n).
\]

Due to the $1$-Lipschitz property of the work function, any canonical potential of the above form satisfies the upper bound
\[
    \Phi(w)
    \;\le\;
    h \cdot \min_{X \in \mathcal{C}} w(X)
    + \mathrm{const},
\]
where $h$ is the number of rows in the index matrix $I$, and the constant depends only on the metric space and the chosen coefficient matrix $C$ (but not on the request sequence). Thus, the number of rows $h$ effectively controls the achievable competitiveness bound encoded by the potential.

We also note that this general framework was introduced in \textcite{huang2022deterministic3servercirclelimitation} in a slightly more restricted form. In that formulation, the index matrix is required to contain a row of the form $(-1,-1,\dots,-1)$, and in all other rows the index $1$ appears explicitly, with all distance coefficients involving point $1$ set to zero. Our definition relaxes these structural constraints while preserving the same underlying minimization template.

In our implementation, instead of storing the full symmetric matrix $C$, we use a coefficient vector of size $\binom{n}{2}$.

\paragraph{Examples from the Literature.}

The unifying potential of \textcite{coester2021kserverconjectureunifyingpotential} fits into this framework with
\[
    n := k,
    \qquad
    I :=
    \begin{pmatrix}
    1 & 2 & \dots & k \\
    -1 & 2 & \dots & k \\
    -2 & -2 & \dots & k \\
    \vdots & \vdots & \ddots & \vdots \\
    -k & -k & \dots & -k
    \end{pmatrix},
    \qquad
    C := 0.
\]

For $k=3$, the potential of \textcite{huang2022deterministic3servercirclelimitation} is given by
\[
    n := 4,
    \qquad
    I :=
    \begin{pmatrix}
    -1 & -1 & -1 \\
    1 & 2 & -3 \\
    1 & 3 & -4 \\
    1 & 4 & -2
    \end{pmatrix},
\qquad
    C :=
    \begin{pmatrix}
    0 & 0 & 0 & 0 \\
    0 & 0 & 1 & 1 \\
    0 & 1 & 0 & 1 \\
    0 & 1 & 1 & 0
    \end{pmatrix}.
\]

Equivalently, these potentials can be written in the original form:
\[
    \Phi(w) = \min\limits_{x_1, ..., x_k \in M}\sum\limits_{i=0}^{k} w(\overline{x_i}, ..., \overline{x_i}, x_{i+1}, ..., x_k)
\]
and
\[
\begin{aligned}
    \Phi(w) &= \min\limits_{u, x, y, z} w(\overline{u}, \overline{u}, \overline{u}) + w(u, x, \overline{y}) + w(u, y, \overline{z}) + w(u, z, \overline{x}) - \\ &
    - d(x, y) - d(x, z) - d(y, z)
\end{aligned}
\]
correspondingly

\subsection{Discovered Potentials}
\label{sec: Potential, Codex with Human-in-the-Loop}

Below we report representative canonical potentials discovered with human-in-the-loop assistance.

\paragraph{$k$-competitive candidate.}

\[
    n := 5,
\qquad
    I :=
    \begin{pmatrix}
    -5 & -5 & -5 & -5 \\
    5 & -1 & -2 & -2 \\
    5 & 1 & 3 & 4 \\
    5 & 2 & -4 & -4 \\
    5 & 2 & 4 & -3
    \end{pmatrix},
\qquad
    C :=
    \begin{pmatrix}
    0 & -1 & 0 & -1 & 0 \\
    -1 & 0 & 1 & 0 & 0 \\
    0 & 1 & 0 & -1 & 0 \\
    -1 & 0 & -1 & 0 & 0 \\
    0 & 0 & 0 & 0 & 0
    \end{pmatrix}.
\]

\paragraph{$(k+1)$-competitive candidate.}

\[
    n := 4,
    \qquad
    I :=
    \begin{pmatrix}
    1 & 2 & 3 & 4 \\
    1 & 2 & 3 & 4 \\
    -1 & 2 & 3 & 4 \\
    -2 & -2 & 3 & 4 \\
    -3 & -3 & -3 & 4 \\
    -4 & -4 & -4 & -4
    \end{pmatrix},
    \qquad
    C := 0.
\]

This construction closely resembles the unifying potential for $k=4$. The main difference is that one work-function term (e.g., $(1,2,3,4)$ or equivalently $(-1,2,3,4)$ up to symmetry) appears with multiplicity two.

\paragraph{Shinka's modification of Unifying Potential.}
The index matrix is the same as in the Unifying Potential case, but it increased the $n$ from $4$ to $5$ by effectively introducing additional point to the minizmiation and connecting it to the rest of the points only via distance coefficients:

\[
    C := \begin{pmatrix}
        0 & 1 & -1 & -1 & -1 \\
        1 & 0 & -1 & -1 & -1 \\
        -1 & -1 & 0 & 1 & 1 \\
        -1 & -1 & 1 & 0 & 1 \\
        -1 & -1 & 1 & 1 & 0
    \end{pmatrix}
\]
So in the non-matrix form:
\[\begin{aligned}
\Phi(w) &= \min\limits_{a_1,\dots,a_5}
\Bigl(
  w(a_1,a_2,a_3,a_4)
  + w(\overline{a_1},a_2,a_3,a_4)
  + w(\overline{a_2},\overline{a_2},a_3,a_4) \\
&\qquad\qquad
  + w(\overline{a_3},\overline{a_3},\overline{a_3},a_4)
  + w(\overline{a_4},\overline{a_4},\overline{a_4},\overline{a_4})
\Bigr) \\
&\quad - \Bigl(
  d(a_1,a_2) - d(a_1,a_3) - d(a_1,a_4) - d(a_1,a_5)
  - d(a_2,a_3) - d(a_2,a_4) - d(a_2,a_5) \\
&\qquad\qquad
  + d(a_3,a_4) + d(a_3,a_5) + d(a_4,a_5)
\Bigr).
\end{aligned}\]

\section{Evaluation Framework Implementation Details}
\label{sec: Evaluation Framework Implementation Details}
\subsection{Potential Definition}
\label{subsec: Implementation Details: Potential Definition}

Our reference implementation expects submissions to define a \texttt{Potential} object with a lightweight callable interface:

\begin{lstlisting}[style=py]
class Potential:
    def __init__(self, context: WFContext, **kwargs):
        pass

    def __call__(self, wf: np.ndarray) -> float:
        pass
\end{lstlisting}

Here, \texttt{WFContext} is a helper class that encapsulates the metric instance (via the distance matrix) and provides utilities for working with work functions and configurations:

\begin{lstlisting}[style=py]
class WFContext:
    def __init__(self, k: int, distance_matrix: np.ndarray):
        pass

    def config_to_idx(self, config: Tuple[int, ...]) -> int:
        """
        Maps `config` (a tuple of point indices) to its index in the work-function
        vector, i.e., w(C) = wf[idx] where config_to_idx(C) = idx.
        """
        pass

    def idx_to_config(self, idx: int) -> Tuple[int, ...]:
        """
        Inverse mapping of `config_to_idx`.
        """
        pass

    def update_wf(self, wf: np.ndarray, r: int) -> np.ndarray:
        """
        Returns the updated work function after serving request r.
        """
        pass

    ...
\end{lstlisting}

We support precomputation via \texttt{\_\_init\_\_} so that a submission can amortize expensive setup and keep \texttt{\_\_call\_\_} fast. This is particularly important because evaluation requires many repeated calls to the potential over large work-function graphs. For example, our \texttt{CanonicalPotential} implementation precomputes indices of relevant configurations:

\begin{lstlisting}[style=py]
class CanonicalPotential:
    def __init__(
        self, context, n: int,
        index_matrix: np.ndarray,
        coefs: np.ndarray
    ):
        # shape: [N_CANDIDATES x k]
        self.candidate_config_idxes = ...

        # shape: [N_CANDIDATES]
        self.penalties = ...

    def __call__(self, wf: np.ndarray) -> float:
        wf_part = wf[self.candidate_config_idxes].sum(axis=-1)
        return (wf_part - self.penalties).min()
\end{lstlisting}

\subsection{Metrics Definition}
\label{subsec: Metrics Definition}

\paragraph{Core Metrics.}

\paragraph{\texttt{violations\_k}.}  
Number of edges violating the normalized inequality
\[
\Phi(\hat{w_v}) - \Phi(\hat{w_u}) + (k+1)\,\Delta{\mathrm{OPT}}(w_u, r)
\;\ge\;
\nabla(w_u, r).
\]
Equivalently, this corresponds to the step inequality
\[
\Phi(w_{v}) - \Phi(w_u)
\;\ge\;
\nabla(w_u,r_{t+1})
\]
in the normalized work-function graph.  
This is the \emph{primary optimization target}: the smaller \texttt{violations\_k}, the closer the potential is to satisfying all required inequalities.  
The optimal potential achieves \texttt{violations\_k} $= 0$.

\paragraph{\texttt{violations\_k\_l\{1,2,inf\}}.}  
The $\ell_1$, $\ell_2$, and $\ell_\infty$ norms of violation magnitudes.  
These quantify the \emph{severity} of violations. For example, a potential may have many small violations (large $\ell_1$ but small $\ell_\infty$), suggesting it is close to feasibility and may be repairable via small structural changes.

\paragraph{\texttt{violations\_dmin\_0}.}  
Number of violated edges with $\Delta{\mathrm{OPT}} = 0$.  
For such edges the inequality reduces to
\[
\Phi(\hat{w_v}) - \Phi(\hat{w_u})
\;\ge\;
\nabla(w_u, r),
\]
with no contribution from the OPT increase term.  
These transitions are particularly demanding and measure how well the potential handles \emph{zero-OPT-gap} edges.

\paragraph{\texttt{detected\_dmin\_0\_score}.}  
Fraction of ``hard'' edges (those with $\Delta{\mathrm{OPT}} = 0$ and $\nabla(w_u, r) > 0$) for which $\Phi(w_v) \neq \Phi(w_u)$.  
This measures \emph{sensitivity}: whether the potential meaningfully distinguishes difficult transitions.  
For instance, a trivial constant potential may incur few violations under relaxed metrics, yet will typically fail to detect any such hard edges.

\paragraph{\texttt{violations\_renorm}.}  
Number of violations in the \emph{renormalized} graph, evaluated as
\[
\Phi\bigl(\hat{w_{v}} + \mathbf{1}\cdot\Delta{\mathrm{OPT}}\bigr)
-
\Phi\bigl(\hat{w_u}\bigr)
\;\ge\;
\nabla(w_u,r),
\]
where $+ \mathbf{1} \Delta\operatorname{OPT}$ means adding the same constant to all work function values.
This metric serves as a consistency check under normalization.

For example, a simple sum-type potential often performs very well under renormalized evaluation, since normalization removes scaling artifacts. However, such potentials typically perform poorly on \texttt{violations\_k}, as they implicitly correspond to a very large competitive ratio. Because \texttt{violations\_k} balances both feasibility and competitive ratio, a low \texttt{violations\_renorm} alone does \emph{not} indicate a strong potential.

\paragraph{Competitiveness Metrics.}

\paragraph{\texttt{strong\_hypothesis\_rho}.}  
The smallest value $\rho$ such that all edges with $\Delta{\mathrm{OPT}} > 0$ satisfy
\[
\Phi(w_v) - \Phi(w_u)
+
(\rho+1)\,\Delta{\mathrm{OPT}}(w_u, r)
\;\ge\;
\nabla(w_v, r).
\]
This estimates the \emph{tightest attainable competitive ratio} for the given potential structure.

\paragraph{\texttt{opt\_upper\_bound}.}  
A structural proxy for the competitive ratio; it estimates the growth rate of $\Phi(\cdot)$ with adding same constant value to all work function values.

\paragraph{Interpretation.}
\begin{itemize}
    \item If \texttt{opt\_upper\_bound} $< k+1$, then \texttt{violations\_k} cannot reach zero since every algorithm is at least $k$ competitive. 
    \item If \texttt{strong\_hypothesis\_rho} $< \texttt{opt\_upper\_bound}$ and \texttt{violations\_dmin\_0} $= 0$, the competitive ratio can potentially be reduced.
    \item If \texttt{strong\_hypothesis\_rho} $> \texttt{opt\_upper\_bound}$, the targeted competitive ratio must be increased.
\end{itemize}
Thus, \texttt{strong\_hypothesis\_rho} reflects the best competitive ratio compatible with the current potential form.

\paragraph{Bellman-Based Guidance Metrics.}

For diagnostics, we compute the Bellman potential (via Bellman–Ford) in the normalized graph with edge weights
\[
(k+1)\,\Delta{\mathrm{OPT}} - \nabla.
\]
Although such numeric potentials do not yield symbolic proofs, they provide valuable guidance for constructing structured potentials.

\paragraph{\texttt{bellman\_\{node/edge\}\_\{r2,corr\}}.}  
Coefficient of determination ($R^2$) or correlation between:
\begin{itemize}
    \item proposed potential values $\Phi$, and
    \item Bellman–Ford potential values,
\end{itemize}
either across nodes (\texttt{node}) or across edges (\texttt{edge}, comparing $\Phi(w_v)-\Phi(w_u)$).  

High correlation indicates that the candidate potential resembles the Bellman–Ford solution, though this does not guarantee validity.

\subsection{Search Framework}

Our \emph{search} formulation introduces three interfaces: \texttt{Potential}, \texttt{PotentialFamily}, and \texttt{SearchEvaluator}. The \texttt{Potential} interface remains unchanged from the non-search setting. The \texttt{PotentialFamily} interface is responsible for driving optimization over the space of candidate potentials, typically through an ask/tell interaction loop. Finally, \texttt{SearchEvaluator} provides a \emph{proxy} evaluation signal for proposed candidates. In the simplest implementation, it computes the potential on all nodes and returns the total number of violated inequalities; however, this can become prohibitively expensive on larger instances.

The search process proceeds as follows. Within a fixed time budget, \texttt{PotentialFamily} is periodically \emph{asked} to propose candidate potentials for evaluation. Each proposed candidate is sent to a worker pool and scored by \texttt{SearchEvaluator}. The resulting feedback is then \emph{told} back to \texttt{PotentialFamily}, which may use it to update its internal search state and propose improved candidates in subsequent rounds.

At the end of the allotted time, \texttt{PotentialFamily} returns a set of final candidates, which are then evaluated on the full benchmark instances using the complete scoring procedure. The score of the search method is defined by the best candidate found during this process.

We provide default implementations for all three components: a naive \texttt{Potential}, a canonical \texttt{Potential}, a naive \texttt{SearchEvaluator} that exhaustively computes all violations, and a naive \texttt{PotentialFamily} that simply returns the supplied hyperparameters without adaptation.

We also experimented with a less restrictive evaluation protocol in which participants are only required to define a \texttt{Potential} and output hyperparameters after a fixed timeout. Under this protocol, submissions are executed as standalone programs with a specified time budget and CPU allocation.

\section{Methods Descriptions and Used Hyperparameters}
\label{sec: Method Description and Used Hyperparameters}

The hyperparameters for \textsc{ShinkaEvolve} and \textsc{LoongFlow} were selected based on the example configurations provided in their public repositories. In our preliminary ablations, moderate changes to these parameters did not lead to substantial differences in performance.

\paragraph{Best-of-$N$.}
In each Best-of-$N$ iteration, we allow up to two sampling attempts to obtain a parseable response from the LLM. We use temperature $\tau = 0.8$ and cap responses at \texttt{max\_tokens}=16{,}000.

\paragraph{\textsc{ShinkaEvolve}.}
\textsc{ShinkaEvolve} emphasizes \emph{sample efficiency} through mechanisms that balance exploration and exploitation (e.g., adaptive parent sampling), encourage diversity (e.g., novelty-based rejection sampling of code proposals), and adaptively allocate generation effort across models, while iteratively improving program quality.

In the $k=3$ testbed experiments, we use the same model for coding, novelty, and meta tasks. We employ \emph{weighted} parent sampling with \texttt{parent\_selection\_ratio}=10. Each prompt includes 4 archive inspirations and 2 top-$k$ inspirations. The archive size is set to 40, and the elite selection ratio is 0.3. Migration is performed every 10 iterations with migration rate 0.1. We use 4 islands. Generations are capped at 80{,}000 tokens, and patch types are sampled with probabilities 0.6, 0.3, and 0.1 for \texttt{diff}, \texttt{full}, and \texttt{cross}, respectively.

\paragraph{\textsc{LoongFlow}.}
\textsc{LoongFlow} replaces ``blind'' mutation with a more deliberative \emph{Plan--Execute--Summarize} loop, and combines this with an evolutionary memory system and diversity-preserving selection to support longer-horizon improvement and reduce premature convergence.

We use the same model for the planner, executor, and summarizer. The planner is allocated 8 rounds; the executor uses 4 rounds with branching factor 4; and the summarizer is allowed up to 4 ReAct steps. The database uses 3 islands with population size 40, and the parent-sampling parameter is set to 2.

\begin{figure}[!t]
    \centering
    \includegraphics[width=\linewidth]{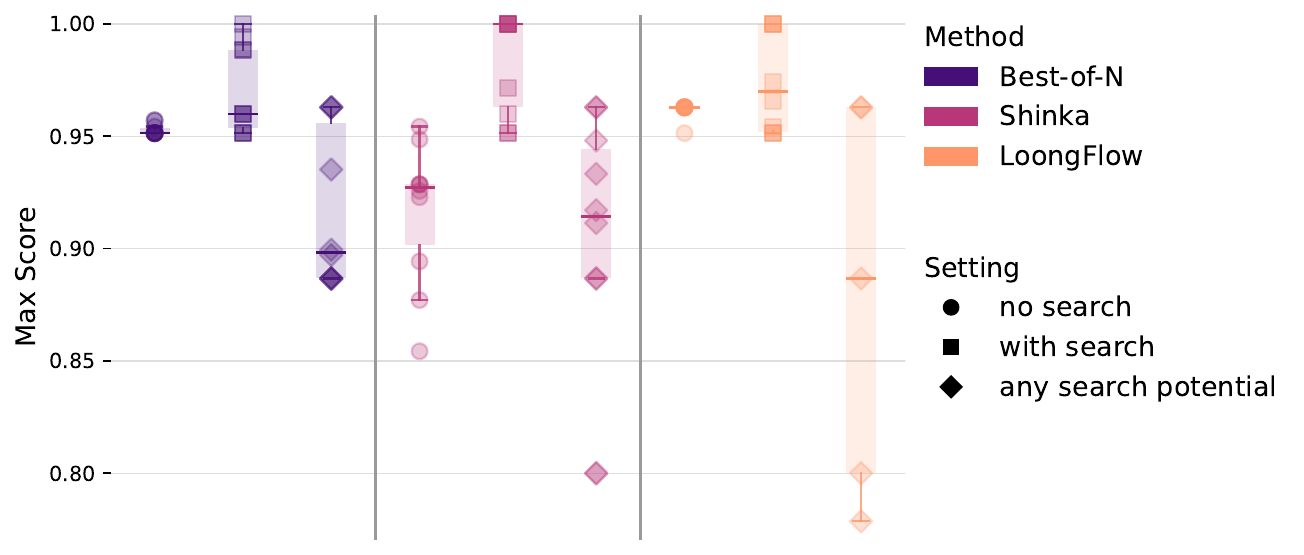}
    \caption{10 independent runs of \textsc{ShinkaEvolve}, Best-of-$N$, \textsc{LoongFlow} per task design: (i) no search, just suggest a canonical potential, (ii) with search over canonical potentials, (iii) with search without canonical potential hint. All methods had the same starting conditions per task design and had 100 evaluations cap.}
    \label{fig: Comparing Search, With Search, Any Search Potential}
\end{figure}

\section{$k=3$ testbed experiments}
\label{sec: k=3 testbed experiments}

To study the effect of task design, we consider three settings: 
(i) \emph{no search} with a canonical potential hint, 
(ii) \emph{search} with a canonical hint, and 
(iii) \emph{search} without a canonical hint. 
\autoref{fig: Comparing Search, With Search, Any Search Potential} shows box plots of the best scores achieved across runs and \autoref{tab: Support Summary comparing Search, With Search, Any Search Potential} shows summary statistics of these runs.

\begin{table}[!t]
    \centering
    \caption{Support summary for the \autoref{fig: Comparing Search, With Search, Any Search Potential}. ``-search'' means no search with canonical hint setting, ``+search'' means with search setting with canonical hint and ``any'' means with search and without canonical hint setting. The evaluations were done on one $k = 3, m = 6$ without $k$-taxi augmentations instance.}
    \vspace{3pt}
    \label{tab: Support Summary comparing Search, With Search, Any Search Potential}
    \scriptsize
\begin{tabularx}{\textwidth}{>{\raggedright\arraybackslash}X*{9}{c}}
\toprule
Metric & \multicolumn{3}{c}{Best-of-N} & \multicolumn{3}{c}{Shinka} & \multicolumn{3}{c}{LoongFlow} \\
\cmidrule(lr){2-4}\cmidrule(lr){5-7}\cmidrule(lr){8-10}
 & -search & +search & any & -search & +search & any & -search & +search & any \\
\midrule
Perfect & 0 & 1 & 0 & 0 & \textbf{6} & 0 & 0 & 4 & 0 \\
Mean Final & 0.953 & 0.971 & 0.948 & 0.916 & \textbf{0.983} & 0.901 & 0.962 & 0.975 & 0.866 \\
Median Final & 0.951 & 0.960 & 0.963 & 0.927 & \textbf{1.000} & 0.914 & 0.963 & 0.970 & 0.845 \\
Min Final & \textbf{0.951} & \textbf{0.951} & 0.897 & 0.854 & \textbf{0.951} & 0.800 & \textbf{0.951} & \textbf{0.951} & 0.779 \\
Max Final & 0.957 & \textbf{1.000} & 0.963 & 0.954 & \textbf{1.000} & 0.963 & 0.963 & \textbf{1.000} & 0.963 \\
Mean Attempt & 0.796 & 0.764 & 0.778 & 0.790 & \textbf{0.916} & 0.785 & 0.424 & 0.621 & 0.438 \\
Median Attempt & 0.786 & 0.731 & 0.800 & 0.789 & \textbf{0.951} & 0.777 & 0.649 & 0.771 & 0.649 \\
\bottomrule
\end{tabularx}

\end{table}

The \emph{search with canonical hint} setting consistently outperforms the alternatives across all methods. In this regime, \textsc{ShinkaEvolve} slightly outperforms \textsc{LoongFlow}, while Best-of-$N$ never achieves a perfect score, although it comes close. In the \emph{no search} setting, Best-of-$N$ outperforms its \textsc{ShinkaEvolve} counterpart and performs only slightly worse than \textsc{LoongFlow}, suggesting that without search the differences between agentic scaffolds become marginal. Finally, the \emph{search without canonical hint} setting exhibits the highest variance; however, the best discovered potentials are comparable to those obtained in the no-search setting, indicating difficulty in navigating the unrestricted hypothesis space.

We hypothesize that this behavior is driven by the structure of good potentials in the $k$-server domain. High-quality solutions tend to exhibit strong symmetry and global structure, making them difficult to discover via incremental, locally guided improvements. This distinction between tasks that favor global construction versus iterative refinement has been noted in \textcite{novikov2025alphaevolvecodingagentscientific}. In particular, when the target is a symmetric mathematical object, large-scale sampling methods such as FunSearch \parencite{Romera-Paredes2024-cn} can be effective due to their ability to explore many candidates, while iterative methods struggle to extract useful signal from the evaluator. Conversely, approaches such as AlphaEvolve \parencite{novikov2025alphaevolvecodingagentscientific} and other LLM-driven evolutionary systems are better suited for problems where evaluation is expensive, heuristic guidance is informative, and incremental improvements are possible.

In our setting, the search paradigm remains advantageous despite these challenges, as it enables scaling through additional evaluations and avoids wasting LLM calls on trivial hyperparameter variations, which can instead be handled within the generated programs. This allows frontier models to focus their capacity on proposing structurally meaningful candidates, making search-based methods a natural fit for this benchmark.

\section{$k = 4$}
\label{sec: k = 4 scaling challenges}

\paragraph{Token Consumption Analysis.}

To better understand the cost side of proxy-based search, we inspected a matched pair of \textsc{Shinka} runs. The two runs used the same task family, the same prompt construction logic, and the same search scaffold (same numbers of inspirations, archive settings, mutation modes, and proxy-evaluation interface), but differed in the model pool: one run used only \texttt{gpt-oss-120b}, while the other used a frontier mixture of \texttt{gpt-5.2}, \texttt{gpt-5-mini}, and \texttt{gpt-oss-120b} (with \texttt{gpt-5.2} also handling meta-summaries). Importantly, the prompt only specified the generic $k$-server potential-search task on the four $k=4$ circle / circle-taxi instances.

Despite this matched setup, the frontier run consumed far more tokens. The frontier run used $28.8$M total tokens across $399$ model-generated iterations, compared with $11.5$M tokens across $395$ iterations for the OSS-only run, a factor of about $2.5\times$. The frontier run did achieve a better best score, improving from $0.4819$ (OSS-only) to $0.5937$, but this gain came at substantially worse token efficiency. 

The main reason was not a higher retry rate or a different interface contract. In fact, retry statistics were comparable, and in some cases slightly worse for the OSS-only run. Instead, the frontier models quickly produced much larger candidate programs, and \textsc{Shinka}'s prompt builder recursively pastes the current program together with several previously discovered programs back into later prompts. As a result, average candidate code size nearly doubled (about $28.4$k versus $14.2$k characters), and average prompt size more than doubled (about $203$k versus $90.5$k characters). This creates a self-amplifying long-context loop: larger code leads to larger prompts, which in turn encourages still larger edits.

\paragraph{Search Summary.}

We carried out a broad $k=4$ search campaign across multiple regimes, ranging from early canonical-structured searches to later structurally hinted and interface-ablated setups. In total, the saved outputs contain $44$ launches, $139$ individual runs, and approximately $20,429$ LLM-generated evaluated candidates ($9,048$ stored \textsc{Shinka} programs and $11,381$ \textsc{LoongFlow} evaluation artifacts). The largest share of this effort was spent in the structurally hinted family, which alone accounts for roughly $10,577$ evaluated candidates. Despite this scale, no experiment produced a complete $k=4$ solution. The best saved result reached combined score 0.9941 with violation vector [0, 96, 9427, 663], so even the strongest candidate still incurred about 9.4k violations on the hardest taxi-augmented $m=6$ instance.

The remaining regimes support the same conclusion. In the early canonical / fixed-structure family, the best saved run reached combined score 0.9844 with violation vector [0, 368, 10524, 2020], while one intermediate candidate briefly reduced the total violation count to 7249 with vector [0, 32, 515, 6702]. Later interface-ablation runs also improved incumbents but did not change the qualitative picture: the best PotentialFamily (without SearchEvaluator and fixed CanonicalPotential) run reached 0.9092 with [114, 11688, 65128, 3294], and the best run with simplified evaluation interface follow-up reached 0.8579 with [210, 15752, 38282, 7762].

Most of these experiments were run with `gpt-oss` models rather than frontier models. The reason is the cost pathology discussed in the previous paragraph: frontier models tend to
generate substantially larger candidate programs, and both \textsc{Shinka} and \textsc{\textsc{LoongFlow}} construct new prompts by pasting in multiple complete prior solutions. As solution size grows, prompt size grows with it, so token usage increases superlinearly over the course of a run. In practice, this made frontier-model sweeps much more expensive without changing the qualitative $k=4$ outcome, and therefore most of the large-scale search effort reported here relied on OSS models.

\subsection{Structural Hints}
\label{sec: Structural Hints}

\paragraph{Unifying Hint.}

In addition to the standard task definition, canonical-potential description, and interface prompt, we provided the following hint:

\begin{tcolorbox}[
  title=Unifying Hint,
  colback=gray!5,
  colframe=black!70,
  boxrule=0.8pt,
  arc=2mm,
  left=2mm,
  right=2mm,
  top=1mm,
  bottom=1mm,
  fonttitle=\bfseries,
  breakable
]

The particularly strong `index\_matrix` is the following:

[[1, 2, 3, 4],
 [-1, 2, 3, 4],
 [-2, -2, 3, 4],
 [-3, -3, -3, 4],
 [-4, -4, -4, -4]]

Search for matrices around this matrix and coefficients that are suited for this matrix. 
\end{tcolorbox}

One suggestion produced by ShinkaEvolve was to consider $n=5$ instead of $n=4$ while keeping the same \texttt{index\_matrix}. Intuitively, this introduces an additional auxiliary point into the minimization, which interacts with the existing points only through the distance coefficients. More details are given in \autoref{sec: Potential, Codex with Human-in-the-Loop}.

We also exhaustively checked all coefficient vectors with support in $\{-1,0,1\}$ for the given matrix, and confirmed that $14$ was indeed the minimum achievable value in this restricted search space.

\paragraph{Symmetricity Hint.}

\begin{tcolorbox}[
  title=Symmetric Hint,
  colback=gray!5,
  colframe=black!70,
  boxrule=0.8pt,
  arc=2mm,
  left=2mm,
  right=2mm,
  top=1mm,
  bottom=1mm,
  fonttitle=\bfseries,
  breakable
]

Try satisfying the following requirements for your search over matrices and coefficients:

- Fix n = 5
- One row should be all -1's
- Every other row should contain 1
- Every other index should be balanced: appear the same number of times with positive and negative signs

Do the search around these matrices and suiting coefficients.
\end{tcolorbox}

Under this hint, we additionally explored several matrices satisfying these structural requirements, including:
\[
    \begin{pmatrix}
        -5 & -5 & -5 & -5 \\
        5 & -1 & -1 & -1 \\
        5 & 1 & 2 & -3 \\
        5 & 1 & 3 & -4 \\
        5 & 1 & 4 & -2
    \end{pmatrix}
    \qquad \text{and} \qquad
    \begin{pmatrix}
        -5 & -5 & -5 & -5 \\
        5 & -1 & -2 & -2 \\
        5 & 1 & 3 & 4 \\
        5 & 2 & -4 & -4 \\
        5 & 2 & 4 & -3
    \end{pmatrix}.
\]

Here the index $5$ is simply a relabeling of the distinguished point previously denoted by $1$. Among these candidates, the second matrix yielded better empirical performance. 

\subsection{Codex-found Solution}
\label{sec: Codex-found Solution}

In this phase, the search was organized around the hypothesis that the main structural difficulty lay in choosing the right canonical $n=7$ index matrix, after which the remaining improvement could be obtained by coefficient optimization alone. The search therefore first stabilized on a highly symmetric five-row $n=7$ ansatz with one distinguished anchor variable and a cyclic arrangement of the remaining variables, and then froze this matrix while focusing entirely on the pair-coefficient vector. Early coefficient searches were deliberately conservative: they explored sparse local perturbations with coefficients essentially restricted to small $\{-1,0,1\}$-type moves, which was already enough to reduce the taxi instance to 6 violations and then 5 violations. The next step was to enlarge the coefficient search neighborhood by allowing larger signed updates and a wider integer range, so that the search was no longer limited to single-step local corrections; this richer coefficient dynamics enabled the jump from the first near-solutions to the first 3-violation hit.

From an engineering point of view, the \texttt{Potential} and \texttt{SearchEvaluator} were designed to make very short search iterations informative. For a fixed $n=7$ matrix, the \texttt{Potential} implementation precomputed the configuration-index tables induced by the rows, as well as all pair-distance values over the full assignment space, so that evaluating a candidate potential reduced to summing a few preindexed work-function entries, subtracting a precomputed linear penalty, and taking a minimum. The \texttt{SearchEvaluator} then exploited this by evaluating candidates on sampled edge sets rather than full scans during search. It memoized node potentials within each evaluation so that each node was scored at most once, mixed random edge samples with a cache of previously discovered hard edges, and stopped early after a prescribed number of violations. In this way, each failed candidate not only produced a score, but also exposed a small set of particularly informative violated edges; these were recycled into subsequent evaluations, so the search increasingly concentrated on the difficult parts of the taxi instance instead of repeatedly spending effort on easy edges.

The \texttt{PotentialFamily} acted as a lightweight staged local-search controller on top of this evaluator. It maintained a queue of candidate coefficient vectors on the fixed symmetric matrix, separated into a cheap “quick” stage and a more reliable “confirm” stage. Candidates were seeded from a small family of structured sparse coefficient patterns and then expanded by repeated local mutations, initially with single-step sign changes and later with larger jumps and larger allowable magnitudes. Only coefficients not involving the distinguished auxiliary variable were actively mutated, which reduced the search dimension and preserved the symmetry assumptions built into the matrix ansatz. Candidates that performed well under quick sampled evaluation were promoted to confirm runs with larger edge budgets, while hard violated edges collected from these runs were added back into the evaluator’s cache. Thus the overall method was not a black-box optimizer, but a deliberately engineered loop: first identify a strong symmetric $n=7$ matrix, then run an increasingly expressive coefficient search around it, with cached hard-edge feedback and staged evaluation guiding the progression from 6 to 5 and finally to 3 violations.

\end{document}